\setlist[enumerate]{leftmargin=1.4cm}
\definecolor{blueish}{rgb}{0.0, 0.53, 0.74}
\definecolor{blush}{rgb}{0.87, 0.36, 0.51}
\numberwithin{equation}{section}
\numberwithin{figure}{section}
\theoremstyle{definition}
\newtheorem{theorem}{Theorem}[section]
\newtheorem{corollary}[theorem]{Corollary}
\newtheorem{proposition}[theorem]{Proposition}
\newtheorem{definition}[theorem]{Definition}
\newtheorem{notation}[theorem]{Notation}
\newtheorem{remark}[theorem]{Remark}
\newtheorem{lemma}[theorem]{Lemma}
\newcommand\qbin[3]{\left[\begin{matrix} #1 \\ #2 \end{matrix} \right]_{#3}}
\newcommand\binomi[2]{\left(\begin{matrix} #1 \\ #2 \end{matrix} \right)}
\newcommand{\numberset}{\mathbb}
\newcommand{\F}{\numberset{F}}
\newcommand{\HH}{\textnormal{H}}
\newcommand{\II}{\textnormal{I}}
\newcommand{\mV}{\mathcal{V}}
\newcommand{\mC}{\mathcal{C}}
\newcommand{\mG}{\mathcal{G}}
\newcommand{\mF}{\mathcal{F}}
\newcommand{\mW}{\mathcal{W}}
\newcommand{\mE}{\mathcal{E}}
\newcommand{\mB}{\mathcal{B}}
\renewcommand{\longrightarrow}{\to}
\newcommand{\dH}{d^{\textnormal{H}}}
\newcommand{\dI}{d^{\textnormal{I}}}
\newcommand{\deltaI}{\delta^{\textnormal{I}}_q}
\newcommand{\deltaH}{\delta^{\textnormal{H}}_q}
\newcommand{\BallH}{\bm{b}^\HH}
\newcommand{\BallI}{\bm{b}^\textnormal{I}}
\newtheorem{claim}{Claim}
\newcommand*{\myproofname}{Proof of the claim}
\newenvironment{clproof}[1][\myproofname]{\begin{proof}[#1]}{\end{proof}}
\title{{The Typical Non-Linear Code over Large Alphabets}}
\author{Anina Gruica$^*$ and Alberto Ravagnani\thanks{Anina Gruica is supported by the Dutch Research Council through grant OCENW.KLEIN.539. Alberto Ravagnani is supported by the Dutch Research Council through grants OCENW.KLEIN.539 and VI.Vidi.203.045.}}
\affil{Department of Mathematics and Computer Science \\ Eindhoven University of Technology, the Netherlands}
\date{}
\begin{document}

\maketitle

\begin{abstract}
We consider the problem of describing the typical (possibly) non-linear code of minimum distance bounded from below over a large alphabet. We concentrate on block codes with the Hamming metric and on subspace codes with the injection metric. In sharp contrast with the behavior of linear block codes, we show that the typical non-linear code in the Hamming metric of cardinality $q^{n-d+1}$ is far from having minimum distance $d$, i.e., from being MDS. We also give more precise results about the asymptotic proportion of block codes with good distance properties within the set of codes having a certain cardinality. We then establish the analogous results for subspace codes with the injection metric, showing also an application to the theory of partial spreads in finite geometry.
\end{abstract}

\section{Introduction}

Understanding how the typical error-correcting code having certain properties looks like is a standard problem in information theory. The most common question is probably whether or not a uniformly random  code meets a given bound (e.g. the Hamming or the Singleton bound) with equality when the block length goes to infinity.
This question was formally addressed in
\cite{barg2002random} using probability methods; see also~\cite{hao2020distribution} and~\cite{gallager1973random}, among many others, on closely related topics.


In this paper, in contrast with previous literature, we ask ourselves how the typical code over a large alphabet looks like, keeping the block length fixed. It is well-known and quite easy to see that most of the linear block codes having a certain dimension are MDS over a sufficiently large field. More precisely, for all $n \ge k \ge 1$ we have
$$\lim_{q \to +\infty} \frac{\mbox{\,number of $[n,k]_q$ MDS codes\,}}{\mbox{number of $[n,k]_q$ codes}}=1.$$
In other words, the probability that a uniformly random $k$-dimensional linear block code is MDS approaches 1 as the field size $q$ grows. Note that the same result is far from being true when the block length goes to infinity.

\begin{remark}
An intuitive explanation for the density of linear MDS codes might be that, for $q$ large, most pairs of vectors are far from each other in the Hamming metric. Accordingly, a uniformly random set of such vectors is expected to have optimal distance properties. In this short paper, we show that this intuitive explanation is in fact incorrect.
\end{remark}

We depart from the linear setting and investigate the distance properties of the typical (possibly) non-linear block code of a given cardinality, finding that the probability that a uniformly random code of cardinality $q^{n-d+1}$ has minimum distance~$d$ goes to 0 as $q$ grows. This is in sharp contrast with the behavior of linear MDS codes described above. In fact, we investigate more generally the asymptotic density of codes of given cardinality and minimum distance bounded from below, showing that the decisive cardinality for sparseness/density is (asymptotically) 
$$\sqrt{q^{n-d+1}},$$
i.e., the square root of the maximum cardinality of a block code
$\mC \subseteq \F_q^n$ of minimum distance lower bounded by $d$.

The proof techniques used in this paper rely on an approach developed in~\cite{gruica2020common} where we prove that maximum rank-distance codes are sparse over large fields, with only very few exceptions. We refer to \cite{gruica2020common} for the proofs and include references when we do so.

In the second part of the paper we investigate the asymptotic density of subspace codes endowed with the injection metric. We establish the analogues of the results obtained in the Hamming metric and determine the decisive asymptotics for sparseness/density. As an application of our results in finite geometry, we study the density of partial spreads within the collection of $k$-subspaces having a prescribed cardinality.

\section{Preliminaries}

Throughout the paper, $q$ is a prime power, $\F_q$ is the finite field with $q$ elements, and $n \ge 2$ denotes an integer. We start by recalling some notions from classical coding theory; see e.g.~\cite{macwilliams1977theory}.

\begin{definition}
A (\emph{block}) \emph{code}  is a subset $\mC \subseteq \F_q^n$ of cardinality $|\mC| \ge 2$. The \emph{minimum} (\emph{Hamming}) \emph{distance} of $\mC$ is 
\begin{align*}
    \dH(\mC) = \min\{\dH(x,y) \mid x,y \in \mC, \; x \ne y\},
\end{align*}
where $\dH$ denotes the Hamming distance on $\F_q^n$. 
\end{definition}

\begin{remark}
Block codes in the Hamming metric can be defined over any alphabet of at least two symbols. In this paper, we only consider alphabet sizes that are equal to prime powers in order to treat block codes and subspace codes in a uniform way. All the results on block codes however extend to arbitrary alphabets.
\end{remark}

It is well-known that the cardinality of a code $\mC \subseteq \F_q^n$ with $\dH(\mC) \ge d$ satisfies $\log_q(|\mC|) \le n- d+1$; see \cite{singleton1964maximum}. This inequality is the famous \emph{Singleton bound} and codes meeting it with equality are called \emph{MDS} (\emph{Maximum Distance Separable}). 

Recall that the \emph{Hamming ball} of radius $0 \le r \le n$ and center $x \in \F_q^n$ is the set 
$\{y \in \F_q^n \mid \dH(x,y) \le r\}$. Its size is $$\sum_{i=0}^r \binom{n}{i}(q-1)^i$$ and does not depend on the center $x$. The (asymptotics of the) size of the Hamming ball will be of crucial importance in Section~\ref{sec:classicalcodes}.

It is natural to ask 
how the typical code of a given cardinality looks like in certain parameter ranges. In this paper, we concentrate on the scenario where the alphabet size $q$ is large.
To address this question formally, we consider the problem
of estimating 
the proportion of codes of 
minimum distance lower bounded by a given integer, say $d$, within the family of codes having the same cardinality. In order to simplify arguments in the sequel, we introduce the following terminology.

\begin{definition} \label{def:deltaclass}
For $1 \le d \le n$, let 
\begin{align*} 
    \deltaH(n,S,d) = \frac{|\{\mC \subseteq \F_q^n \mid |\mC|=S, \, \dH(\mC) \ge d\}|}{|\{\mC \subseteq \F_q^n \mid |\mC|=S \}|}
\end{align*}
denote the \emph{density function} of codes in $\F_q^n$ of cardinality $S$ and minimum distance at least $d$, among all codes in $\F_q^n$ of cardinality $S$.
\end{definition}

Since we focus on large alphabets, we study the asymptotics of the previous problem for $q$ going to infinity. More formally, we denote by $Q$ the set of prime powers, fix~$n$ and~$d$, and 
consider a sequence $(S_q)_{q \in Q}$ of integers with $S_q \ge 2$ for all $q \in Q$. We want to study how the asymptotic density $\lim_{q \to +\infty} \delta_q(n,S_q,d)$ depends on the asymptotics of the sequence $(S_q)_{q \in Q}$.

\begin{notation}
We use the Bachmann-Landau notation (``Little O'' and~``$\sim$'') to describe the asymptotic growth of real-valued functions defined on $Q$; see e.g.~\cite{de1981asymptotic}. We omit ``$q \in Q$'' when writing $q \to +\infty$ and often omit ``as $q \to +\infty$'' when writing, for example, ``$f(q) \in o(1)$''.
\end{notation}

In the second part of the paper, we will consider the same problem in the context of subspace codes endowed with the injection metric. Other distance functions can be studied with the same method. In this article, we only treat the Hamming and the injection distances because of space constraints.

\begin{definition}
For an integer $1 \le k \le n$, we denote by $\mG_q(k,n)$
the set of all $k$-dimensional subspaces of $\F_q^n$, also known as the \emph{Grassmannian}.
The (\emph{injection}) \emph{distance} between $X,Y \in \mG_q(k,n)$ is $\dI(X,Y) = k-\dim(X\cap Y)$ and the \emph{minimum} (\emph{injection}) \emph{distance} of a subspace code $\mC \subseteq \mG_q(k,n)$ is 
$$\dI(\mC)=\min\{\dI(X,Y) \mid X,Y \in \mC, \, X \neq Y\}.$$
\end{definition}

Note that the injection distance on $\mG_q(k,n)$ coincides with the so-called \textit{subspace distance} divided by 2.

\begin{remark} \label{rem:klen-k}
It is well-known and easy to see that taking orthogonals induces a one-to-one correspondence between subspace codes in $\mG_q(k,n)$ of cardinality $S$ and minimum distance $d$ and subspace codes in $\mG_q(n-k,n)$ of cardinality $S$ and minimum distance $d$; see e.g.~\cite[Section III]{koetter2008coding}. Therefore, to simplify certain statements throughout the paper, we will always assume $k \le n-k$ in the sequel. 
\end{remark}

As for the Hamming distance, we will need to consider the ball of a certain radius in the metric space $(\mG_q(k,n),\dI)$.
For a radius $0 \le r \le k$, the latter is the set
\smash{$\{Y \in \mG_q(k,n) \mid \dI(X,Y) \le r\}.$} Its cardinality
can be conveniently expressed in terms of the $q$-ary binomial coefficient, which 
counts the number 
of $\ell$-dimensional subspaces of an $m$-dimensional space over $\F_q$ and is denoted by  $$\qbin{m}{\ell}{q}.$$ 

\begin{proposition}[\text{see \cite[Theorem 5]{koetter2008coding}}]  \label{prop:ballsizeI}
For all $X \in \mG_q(k,n)$ we have
\begin{align*}
    |\{Y \in \mG_q(k,n) \mid \dI(X,Y) \le r\}| = \sum_{i=0}^{r} q^{i^2}\qbin{k}{i}{q} \qbin{n-k}{i}{q}.
\end{align*}
In particular, the size of the injection ball in $\mG_q(k,n)$ does not depend on the choice of its center.
\end{proposition}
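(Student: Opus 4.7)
My plan is to prove the formula by stratifying the ball according to the dimension of the intersection $X\cap Y$. Since $\dI(X,Y)=k-\dim(X\cap Y)$, the condition $\dI(X,Y)\le r$ is equivalent to $\dim(X\cap Y)\ge k-r$. Setting $i:=k-\dim(X\cap Y)$, it suffices to show that for each $0\le i\le r$ the number of $Y\in\mG_q(k,n)$ with $\dim(X\cap Y)=k-i$ equals $q^{i^2}\qbin{k}{i}{q}\qbin{n-k}{i}{q}$, and then sum over $i$.

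To count subspaces $Y$ at intersection dimension $k-i$ with $X$, I would do it in two stages. First choose the intersection subspace $Z\subseteq X$ of dimension $k-i$; there are $\qbin{k}{k-i}{q}=\qbin{k}{i}{q}$ such choices. Then I must count the $k$-dimensional $Y\supseteq Z$ with $Y\cap X=Z$. Passing to the quotient $\F_q^n/Z$, which has dimension $n-k+i$, such $Y$ correspond bijectively to $i$-dimensional subspaces $\overline{Y}\subseteq\F_q^n/Z$ that intersect the $i$-dimensional subspace $\overline{X}:=X/Z$ trivially. So the whole problem reduces to a single auxiliary count: the number $N(i,m)$ of $i$-dimensional subspaces of an $m$-dimensional $\F_q$-vector space meeting a prescribed $i$-dimensional subspace trivially, with $m=n-k+i$. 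I expect the only real work to lie in computing $N(i,m)$ and showing it equals $q^{i^2}\qbin{n-k}{i}{q}$.

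For $N(i,m)$, I would count ordered bases. Picking vectors $w_1,\dots,w_i$ such that each $w_{j+1}$ lies outside the span of $\overline{X}$ together with $w_1,\dots,w_j$ yields
\[
\prod_{j=0}^{i-1}\bigl(q^m-q^{i+j}\bigr)
\]
choices, and dividing by the number $\prod_{j=0}^{i-1}(q^i-q^j)$ of ordered bases of an $i$-dimensional space gives $N(i,m)$. Factoring $q^{i+j}=q^{i+j}$ out of the numerator and $q^j$ out of the denominator, the powers of $q$ combine to an overall factor of $q^{i^2}$ (the exponent arithmetic gives $(3i^2-i)/2-i(i-1)/2=i^2$), while the remaining ratio
\[
\frac{\prod_{j=0}^{i-1}(q^{m-i-j}-1)}{\prod_{\ell=1}^{i}(q^\ell-1)}=\frac{\prod_{j=0}^{i-1}(q^{n-k-j}-1)}{\prod_{\ell=1}^{i}(q^\ell-1)}
\]
is precisely $\qbin{n-k}{i}{q}$. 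Thus $N(i,n-k+i)=q^{i^2}\qbin{n-k}{i}{q}$.

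Combining, the count of $Y$ with $\dim(X\cap Y)=k-i$ is $\qbin{k}{i}{q}\cdot q^{i^2}\qbin{n-k}{i}{q}$, and summing over $0\le i\le r$ yields the claimed formula. The independence from $X$ is automatic, since the argument uses only the dimension of $X$ and of the ambient space. The main obstacle is really just the $q$-binomial bookkeeping in the $N(i,m)$ calculation; once that factor of $q^{i^2}$ is identified, the rest is immediate.
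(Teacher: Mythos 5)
Your argument is correct, and it is essentially the standard proof of this counting identity. Note that the paper itself does not give a proof: it simply cites this as \cite[Theorem 5]{koetter2008coding}, so there is no paper-internal argument to compare against. Your two-stage stratification — first fixing the intersection $Z=X\cap Y$ of dimension $k-i$ (contributing $\qbin{k}{i}{q}$ choices), then counting $i$-dimensional subspaces of the quotient $\F_q^n/Z$ (of dimension $n-k+i$) that meet $\overline{X}=X/Z$ trivially via the ordered-basis count $\prod_{j=0}^{i-1}(q^{n-k+i}-q^{i+j})\big/\prod_{j=0}^{i-1}(q^i-q^j)$ — is exactly the classical derivation, and your exponent bookkeeping $(3i^2-i)/2 - i(i-1)/2 = i^2$ is right. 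One thing worth making explicit for the reader: the two-stage count is a genuine bijection rather than merely an enumeration, since for a given $Y$ with $\dim(X\cap Y)=k-i$ the subspace $Z=X\cap Y$ is uniquely determined, so no $Y$ is counted twice across different choices of $Z$. With that remark added, the proof is complete and matches the source result.
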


The following is the analogue of the Singleton bound in the context of subspace codes.

\begin{theorem} [\text{see \cite[Theorem 9]{koetter2008coding}}] \label{thm:singletonbound}
Suppose $k \le n-k$. For any subspace code \smash{$\mC \subseteq \mG_q(k,n)$} with minimum distance $\dI(\mC) \ge d$ we have
\begin{align*}
    |\mC| \le \qbin{n-d+1}{n-k}{q}.
\end{align*}
\end{theorem}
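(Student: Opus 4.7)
The plan is to mimic the classical proof of the Singleton bound by a ``puncturing'' argument, but carried out in the lattice of subspaces: fix an auxiliary subspace of $\F_q^n$ of dimension $n-d+1$ and intersect every codeword with it, using the distance hypothesis to control how many codewords can share a low-dimensional subspace.

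More precisely, I would first pick an arbitrary $W \in \mG_q(n-d+1,n)$ (its existence uses $d \le n$, which we may assume since otherwise the statement is vacuous given $|\mC|\ge 2$). For each $X \in \mC$, the elementary dimension formula gives
\begin{equation*}
\dim(X \cap W) \ge \dim(X) + \dim(W) - n = k + (n-d+1) - n = k-d+1,
\end{equation*}
so $X \cap W$ contains at least one $(k-d+1)$-dimensional subspace, and in particular the set
\begin{equation*}
\mF(X) = \{Z \in \mG_q(k-d+1,n) \mid Z \subseteq X \cap W\}
\end{equation*}
is nonempty (one checks $k-d+1\ge 0$; the case $d>k$ is void because $\dim(X\cap Y)<0$ would be forced).

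The key step is to show that the families $\mF(X)$ are pairwise disjoint as $X$ varies in $\mC$. Suppose $Z \in \mF(X) \cap \mF(Y)$ with $X,Y \in \mC$. Then $Z \subseteq X \cap Y$, so $\dim(X \cap Y) \ge k-d+1$, and therefore $\dI(X,Y) = k - \dim(X \cap Y) \le d-1$. Since $\dI(\mC) \ge d$, this forces $X=Y$, proving disjointness.

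Finally, I would combine these two observations: $\mC \to \bigsqcup_{X \in \mC} \mF(X)$ is an injection into the set of $(k-d+1)$-dimensional subspaces of $W$. Since $W$ has dimension $n-d+1$, the target has cardinality
\begin{equation*}
\qbin{n-d+1}{k-d+1}{q} = \qbin{n-d+1}{n-k}{q},
\end{equation*}
where the equality is the usual symmetry of the $q$-binomial, and the claimed bound follows. I do not anticipate a real obstacle here; the only mildly delicate point is picking the right dimension for the ``puncturing'' subspace $W$ so that the induced map into a Grassmannian of smaller dimension is both well-defined and, thanks to the distance assumption, injective.
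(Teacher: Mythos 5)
Your argument is correct, and it is worth noting that the paper itself gives no proof here but simply cites Theorem~9 of K\"otter--Kschischang, whose argument proceeds by \emph{iterated puncturing}: one deletes a hyperplane at a time, shows that each puncturing step reduces the minimum distance by at most one, and after $d-1$ steps lands in $\mG_q(k-d+1,n-d+1)$ with distance at least $1$, i.e.\ distinct codewords. Your proof collapses that iteration into a single step: fix $W$ of dimension $n-d+1$ once and for all, use $\dim(X\cap W)\ge k-d+1$, and exploit the distance hypothesis to show the families $\mF(X)$ of $(k-d+1)$-dimensional subspaces of $X\cap W$ are pairwise disjoint and nonempty, so that choosing one representative from each gives an injection $\mC\hookrightarrow\{Z\in\mG_q(k-d+1,n)\mid Z\subseteq W\}$. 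This sidesteps the bookkeeping in the puncturing operation (notably the case distinction on whether a codeword already lies inside the chosen hyperplane, and the verification that distance drops by at most one per step). The replacement of ``choose one punctured representative per codeword'' with ``the sets of candidate representatives are disjoint'' is exactly what makes the one-shot version clean. The final count $\qbin{n-d+1}{k-d+1}{q}=\qbin{n-d+1}{n-k}{q}$ via $q$-binomial symmetry is correct, and your remarks on the trivial edge cases ($d>k$ forcing $|\mC|\le 1$, nonnegativity of $k-d+1$) are exactly the right hygiene; the hypothesis $k\le n-k$ plays no essential role in your argument, consistent with it being a normalization imposed by the duality in Remark~\ref{rem:klen-k}.
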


While computing the largest size of a subspace code of given minimum distance is an open problem, a family of asymptotically optimal subspace codes has been constructed in \cite{koetter2008coding}. These are the so-called \emph{Reed-Solomon-like codes}.

\begin{theorem}[\text{see \cite[Section V]{koetter2008coding}}]
For $k \le n-k$ and $1 \le d \le k$, there exists a subspace code $\mC \subseteq \mG_q(k,n)$ with $\dI(\mC)=d$ and $|\mC|=q^{(n-k)(k-d+1)}$. 
\end{theorem}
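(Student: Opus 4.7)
The plan is to derive the theorem from the standard \emph{lifting construction} that turns a rank-metric code into a subspace code with controlled injection distance.

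First, I identify $\F_q^n$ with $\F_q^k \oplus \F_q^{n-k}$ and, for each matrix $M \in \F_q^{k \times (n-k)}$, define its \emph{lift} $V_M \in \mG_q(k,n)$ to be the row span of the block matrix $[I_k \, | \, M]$. The $k \times k$ identity block guarantees that the $k$ rows are linearly independent, so $\dim V_M = k$, and that the map $M \mapsto V_M$ is injective.

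Next, I would compute the injection distance between two lifts. A vector $(x,y) \in \F_q^k \oplus \F_q^{n-k}$ lies in $V_M$ if and only if $y = xM$; hence $(x,y) \in V_{M_1} \cap V_{M_2}$ forces $x(M_1-M_2)=0$, giving
\begin{align*}
\dim(V_{M_1} \cap V_{M_2}) = k - \rk(M_1-M_2), \qquad \dI(V_{M_1},V_{M_2}) = \rk(M_1 - M_2).
\end{align*}
Therefore the minimum injection distance of any lifted family equals the minimum rank distance of the underlying matrix code.

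It now suffices to invoke the existence of a maximum rank-distance (MRD) code $\mD \subseteq \F_q^{k \times (n-k)}$ with minimum rank distance exactly $d$ and cardinality $q^{(n-k)(k-d+1)}$, valid whenever $k \le n-k$ and $1 \le d \le k$. Such codes are the classical Gabidulin codes, constructed as evaluation codes of $\F_q$-linearized polynomials of bounded $q$-degree; their minimum rank distance can be read off from the degree bound via the fact that a nonzero $\F_q$-linearized polynomial of $q$-degree $<d$ has at most $q^{d-1}$ roots. Setting $\mC = \{V_M \mid M \in \mD\}$ then yields a subspace code with $|\mC| = |\mD| = q^{(n-k)(k-d+1)}$ and $\dI(\mC) = d$.

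The main obstacle, if one insists on a self-contained argument, is precisely the construction of MRD codes with exact minimum rank distance $d$; everything else is a direct linear-algebraic identification. Citing Gabidulin's or Delsarte's original construction, as done in \cite{koetter2008coding}, shortcuts this step and the remaining verification collapses to the intersection-dimension computation above.
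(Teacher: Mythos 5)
Your proof is correct and matches the standard construction cited in the paper: Koetter and Kschischang's Reed--Solomon-like codes of \cite[Section V]{koetter2008coding} \emph{are} the lifts of a Delsarte--Gabidulin MRD code in $\F_q^{k\times(n-k)}$ of minimum rank distance $d$, merely phrased there via graphs of $\F_q$-linearized polynomials rather than via row spans of $[\,I_k \mid M\,]$. Your intersection computation $\dI(V_{M_1},V_{M_2})=\rk(M_1-M_2)$ is right (the intersection is parametrized by $\ker(M_1-M_2)$), and since the chosen MRD code meets the rank-metric Singleton bound its minimum rank distance is exactly---not merely at least---$d$ under the hypotheses $k\le n-k$, $1\le d\le k$, so $\dI(\mC)=d$ and $|\mC|=q^{(n-k)(k-d+1)}$ as required.
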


Note that in this paper we only concentrate on the Singleton-type bound because it is asymptotically sharp for $q$ large. Several other bounds on the size of subspace codes are available~\cite{xia2009johnson, etzion2011error}.

The following is the analogue of Definition~\ref{def:deltaclass} for subspace codes.

\begin{definition} \label{def:deltasub}
For $1 \le d \le k$, let 
\begin{align*}
\deltaI(n,k,S,d) = \frac{|\{\mC \subseteq \mG_q(k,n)\mid |\mC|=S, \, \dI(\mC) \ge d\}|}{|\{\mC \subseteq \mG_q(k,n) \mid |\mC|=S \}|}
\end{align*}
denote the \emph{density function} of subspace codes  in $\mG_q(k,n)$ of cardinality $S$ and minimum distance at least $d$, among all subspace codes in $\mG_q(k,n)$ of cardinality $S$.
\end{definition}

When studying the typical subspace code, we will fix the three parameters $(n,k,d)$
and 
consider a sequence $(S_q)_{q \in Q}$ of integers with $S_q \ge 2$ for all $q \in Q$. We will then study how the asymptotic density $\lim_{q \to +\infty} \delta_q(n,k,S_q,d)$ depends on the asymptotics of  $(S_q)_{q \in Q}$ for $q$ large.

\section{Graph Theory Tools} \label{sec:bounds}
In this section we briefly state some graph theory tools we will need later. The results are taken from~\cite{gruica2020common} and the proofs are omitted.

\begin{definition}
A (\emph{directed}) \emph{bipartite graph} is a 3-tuple $\mB=(\mV,\mW,\mE)$, where $\mV$ and $\mW$ are finite non-empty sets and $\mE \subseteq \mV \times \mW$. The elements of $\mV \cup \mW$ are the \emph{vertices} of the graph. We say that a vertex~$W \in \mW$ is
\emph{isolated} if there is no $X \in \mV$ with $(X,W) \in \mE$. We say that
$\mB$ is 
\emph{left-regular} of \emph{degree} $\partial \ge 0$ if for all $X \in \mV$
$$|\{W \in \mW \mid (X,W) \in \mE\}| = \partial.$$
\end{definition}

In order to give bounds for the number of non-isolated vertices in a bipartite graph, we need the notion of an association.

\begin{definition} \label{def:assoc}
Let $\mV$ be a finite non-empty set and let $r \ge 0$ be an integer. An \emph{association} on $\mV$ of \emph{magnitude} $r$ is a function 
$\alpha: \mV \times \mV \to \{0,...,r\}$ satisfying the following:
\begin{itemize}
\item[(i)] $\alpha(X,X)=r$ for all $X \in \mV$;
\item[(ii)] $\alpha(X,Y)=\alpha(Y,X)$ for all $X,Y \in \mV$.
\end{itemize}
\end{definition}

Let $\mB=(\mV,\mW,\mE)$ be a finite bipartite graph and let $\alpha$ be an association on~$\mV$ of magnitude $r$.  We say that $\mB$ is \emph{$\alpha$-regular} if for all  $(X,Y) \in \mV \times \mV$ the number of vertices $W \in \mW$ with $(X,W) \in \mE$ and 
$(Y,W) \in \mE$ only depends on $\alpha(X,Y)$. If this is the case, we denote this number by~$\mW_\ell(\alpha)$, where $\ell=\alpha(X,Y)$. 

\begin{remark}
Note that an $\alpha$-regular bipartite graph for an association $\alpha$ is necessarily left-regular of degree $\partial=\mW_r(\alpha)$.
\end{remark}

The main results stated in this paper will be derived by the following two bounds.

\begin{lemma}[\text{see \cite[Lemmma 3.2]{gruica2020common}}] \label{lem:upperbound}
Let $\mB=(\mV,\mW,\mE)$ be a bipartite and left-regular graph of degree $\partial>0$.
Let $\mF \subseteq \mW$ be the collection of non-isolated vertices of $\mW$.
We have
$$|\mF| \le |\mV| \, \partial.$$
\end{lemma}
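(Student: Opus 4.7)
The plan is to prove the bound by a straightforward double counting of the edge set $\mE$. On the one hand, left-regularity lets me compute $|\mE|$ exactly: summing $|\{W \in \mW \mid (X,W) \in \mE\}| = \partial$ over all $X \in \mV$ gives $|\mE| = |\mV| \cdot \partial$. On the other hand, grouping edges by their second coordinate, I would note that for each $W \in \mW$ the number of edges with second coordinate $W$ equals $|\{X \in \mV \mid (X,W) \in \mE\}|$, and that this quantity is at least $1$ precisely when $W \in \mF$. Summing only over the non-isolated vertices therefore yields $|\mE| \ge |\mF|$, and combining the two estimates gives the desired inequality $|\mF| \le |\mV| \cdot \partial$.

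There is no substantive obstacle here — this is essentially a one-line double-counting argument. The only mild subtlety is that because $\mB$ is directed with $\mE \subseteq \mV \times \mW$, one should verify that ``edges leaving $X \in \mV$'' and ``edges arriving at $W \in \mW$'' describe the same underlying set $\mE$, so that both ways of counting it are legitimate. Since this is immediate from the definition of $\mE$ as a subset of $\mV \times \mW$, the argument goes through without further care.
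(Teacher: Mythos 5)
Your double-counting argument is correct and complete: left-regularity gives $|\mE| = |\mV|\,\partial$ exactly, every non-isolated $W \in \mF$ contributes at least one edge so $|\mE| \ge |\mF|$, and combining the two yields the bound. The paper itself omits the proof (citing \cite{gruica2020common}), but this is the standard and essentially unique elementary argument for such a statement, so your approach surely coincides with the cited one.
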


\begin{lemma}[\text{see \cite[Lemmma 3.5]{gruica2020common}}] \label{lem:lowerbound}
Let $\mB=(\mV,\mW,\mE)$ be a finite bipartite $\alpha$-regular graph, where $\alpha$ is an association on~$\mV$ of magnitude~$r$. Let $\mF \subseteq \mW$ be the collection of non-isolated vertices of $\mW$. If
$\mW_r(\alpha) >0$, then 
$$|\mF| \ge  \frac{\mW_r(\alpha)^2 \, |\mV|^2}{\sum_{\ell=0}^r  \mW_\ell(\alpha) \, |\alpha^{-1}(\ell)|}.$$
\end{lemma}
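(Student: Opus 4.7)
The plan is to run a standard Cauchy--Schwarz (second-moment) argument on the in-degrees of the right-side vertices. For each $W \in \mW$, set $\deg(W) = |\{X \in \mV : (X,W) \in \mE\}|$, so that $\mF = \{W \in \mW : \deg(W) > 0\}$. The strategy is to compute $\sum_{W \in \mF} \deg(W)$ and $\sum_{W \in \mF} \deg(W)^2$ separately, both using $\alpha$-regularity, and then invoke Cauchy--Schwarz to bound $|\mF|$ from below by the ratio.

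First I would use $\alpha(X,X) = r$ together with the $\alpha$-regularity hypothesis to conclude that the graph is left-regular of degree $\mW_r(\alpha)$, as already noted in the remark preceding the lemma. Double-counting edges then yields
$$\sum_{W \in \mF} \deg(W) = |\mV| \cdot \mW_r(\alpha).$$
For the second moment, I would count ordered triples $(X,Y,W) \in \mV \times \mV \times \mW$ with $(X,W), (Y,W) \in \mE$. Summing over $W$ first produces $\sum_{W \in \mF} \deg(W)^2$. Summing over $(X,Y)$ first, and grouping these pairs according to the value $\ell = \alpha(X,Y)$, the $\alpha$-regularity hypothesis contributes exactly $\mW_\ell(\alpha)$ triples per pair, so
$$\sum_{W \in \mF} \deg(W)^2 = \sum_{\ell=0}^{r} \mW_\ell(\alpha) \cdot |\alpha^{-1}(\ell)|.$$

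Finally, Cauchy--Schwarz applied to the vector $(\deg(W))_{W \in \mF}$ against the all-ones vector indexed by $\mF$ gives
$$\left(\sum_{W \in \mF} \deg(W)\right)^2 \le |\mF| \cdot \sum_{W \in \mF} \deg(W)^2,$$
and substituting the two identities above and rearranging produces the stated bound. The hypothesis $\mW_r(\alpha) > 0$ is exactly what keeps the argument nondegenerate: it forces each $X \in \mV$ to have at least one neighbor, so $\mF$ is nonempty and the left-hand side of Cauchy--Schwarz is strictly positive; it also guarantees that the denominator on the right-hand side is strictly positive, since $(X,X) \in \alpha^{-1}(r)$ for every $X$ makes the term $\mW_r(\alpha) \cdot |\alpha^{-1}(r)|$ a positive contribution. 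There is no real obstacle in this argument; the only point that requires any care is checking that the double sum $\sum_{(X,Y) \in \mV \times \mV} |\{W \in \mW : (X,W),(Y,W) \in \mE\}|$ collapses cleanly into $\sum_{\ell=0}^{r} \mW_\ell(\alpha) \cdot |\alpha^{-1}(\ell)|$, which is immediate from the definition of $\alpha$-regularity.
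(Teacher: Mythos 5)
Your argument is correct and complete. The edge count $\sum_{W\in\mF}\deg(W)=|\mV|\,\mW_r(\alpha)$ follows from left-regularity (Remark 3.4), the triple count $\sum_{W\in\mF}\deg(W)^2=\sum_{\ell=0}^r\mW_\ell(\alpha)\,|\alpha^{-1}(\ell)|$ follows cleanly from $\alpha$-regularity, and Cauchy--Schwarz finishes it; you also correctly identify where $\mW_r(\alpha)>0$ enters (it makes $\mF$ nonempty and, via $\{(X,X):X\in\mV\}\subseteq\alpha^{-1}(r)$, keeps the denominator positive). The paper omits the proof of this lemma, deferring to~\cite{gruica2020common}, but the cited result is proved there by exactly this second-moment (Cauchy--Schwarz) argument on right-vertex degrees, so your approach coincides with the intended one.
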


The previous lemma follows by combining the notion of an association and the Cauchy-Schwarz Inequality. We refer to~\cite{gruica2020common} for the complete proof.

\section{The Typical Block Code} \label{sec:classicalcodes}

We show how to apply the results of Section~\ref{sec:bounds} to derive estimates for the number of codes in the Hamming metric having minimum distance bounded from below.

\begin{notation}
In this section, let $n$ and $d$ be fixed integers with $2 \le d \le n$ and let $(S_q)_{q \in Q}$ be a sequence of integers with $S_q \ge 2$ for all $q \in Q$ and  for which $\lim_{q \to +\infty} S_q$ exists.
We work with the bipartite graphs $$\mB^\HH_q=(\mV^\HH_q,\mW^\HH_q,\mE^\HH_q),$$ where  $\mV^\HH_q=\{\{x,y\} \subseteq \F_q^n \mid x \ne y, \, \dH(x, y) \le d-1 \}$, $\mW^\HH_q$ is the collection of codes in $\F_q^n$ of cardinality $S_q$, and 
 $(\{x,y\},\mC) \in \mE^\HH_q$ if and only if $\{x,y\} \subseteq \mC$.
\end{notation}

From now on, let $\BallH_q$ denote the size of the Hamming ball in $\F_q^n$ of radius $d-1$. We have \begin{align*}
    |\mV^\HH_q| = \frac{1}{2} q^n \left(\BallH_q-1\right), \quad  |\mW^\HH_q| = \binomi{q^n}{S_q}.
\end{align*}
It follows from the definitions that  $\mB^\HH_q$ is a left-regular graph of degree $$\binomi{q^n-2}{S_q-2}.$$ 
Therefore, by applying Lemma~\ref{lem:upperbound} we obtain the following result.

\begin{theorem} \label{thm:hamupp}
Let $\mF^\HH_q$ be the collection of codes $\mC \subseteq\F_q^n$ that have cardinality $S_q$ and minimum Hamming distance at most~$d-1$. For all $q \in Q$ we have
\begin{align*}
    |\mF^\HH_q| \le \frac{1}{2} q^n \left( \BallH_q-1\right)\binomi{q^n-2}{S_q-2}.
\end{align*}
\end{theorem}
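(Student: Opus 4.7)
The plan is to recognize that the statement is a direct application of Lemma~\ref{lem:upperbound} to the bipartite graph $\mB^\HH_q$ just set up, and that the bulk of the work has already been carried out in the two paragraphs preceding the theorem. I will first argue that $\mF^\HH_q$ coincides with the collection of non-isolated vertices in $\mW^\HH_q$ under the edge relation $\mE^\HH_q$. This is essentially by definition: a code $\mC \in \mW^\HH_q$ satisfies $\dH(\mC) \le d-1$ if and only if it contains two distinct elements $x,y$ with $\dH(x,y) \le d-1$, which in turn is equivalent to the existence of some $\{x,y\} \in \mV^\HH_q$ with $(\{x,y\},\mC) \in \mE^\HH_q$.

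Next I would justify the three numerical inputs that the lemma requires. The cardinality $|\mV^\HH_q| = \tfrac{1}{2} q^n (\BallH_q - 1)$ follows by choosing a first vector $x \in \F_q^n$ in one of $q^n$ ways, a second distinct vector $y$ in the Hamming ball of radius $d-1$ around $x$ in $\BallH_q - 1$ ways (the size of this ball being independent of the center), and dividing by $2$ to account for unordered pairs. Left-regularity of degree $\binomi{q^n-2}{S_q-2}$ is immediate: given a pair $\{x,y\} \in \mV^\HH_q$, the codes of cardinality $S_q$ containing $\{x,y\}$ are in bijection with subsets of $\F_q^n \setminus \{x,y\}$ of cardinality $S_q - 2$. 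This degree is strictly positive provided $2 \le S_q \le q^n$, which is part of our standing assumption, so the hypothesis $\partial > 0$ of Lemma~\ref{lem:upperbound} is satisfied.

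Finally, I would simply invoke Lemma~\ref{lem:upperbound} to conclude
\begin{equation*}
|\mF^\HH_q| \;\le\; |\mV^\HH_q| \cdot \partial \;=\; \frac{1}{2} q^n (\BallH_q - 1) \binomi{q^n-2}{S_q-2},
\end{equation*}
which is the claimed inequality. There is no genuine obstacle here; the statement is essentially a bookkeeping corollary of the graph-theoretic upper bound. The one point worth making explicit is the identification of the ``bad'' event (existence of a close pair inside the code) with the non-isolation of the corresponding vertex in $\mB^\HH_q$, which is what allows the double-counting inequality $|\mF| \le |\mV| \partial$ to translate directly into a bound on codes with small minimum distance.
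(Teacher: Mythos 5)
Your proposal is correct and follows exactly the paper's route: the paper sets up the bipartite graph $\mB^\HH_q$, records $|\mV^\HH_q| = \tfrac{1}{2}q^n(\BallH_q-1)$ and the left-regular degree $\binom{q^n-2}{S_q-2}$, and then invokes Lemma~\ref{lem:upperbound} directly. Your identification of $\mF^\HH_q$ with the non-isolated vertices and the verification of $\partial>0$ are precisely the bookkeeping the paper leaves implicit.
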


We now use Lemma~\ref{lem:lowerbound} to derive a lower bound for the number of codes of minimum distance bounded from above.

\begin{theorem} \label{thm:hamlow}
Let $\mF^\HH_q$ be the collection of codes $\mC \subseteq\F_q^n$ that have cardinality $S_q$ and minimum Hamming distance at most $d-1$. Define the quantities
\begin{align*}
    \beta_q^\HH(0) &= \frac{1}{2}q^n (\BallH_q-1) - 2\BallH_q +3, \\
    \beta_q^\HH(1) &= 2\BallH_q-4, \\
    \Omega^\HH_q &= 1 + \beta_q^\HH(1) \, \frac{S_q-2}{q^n-2} + \beta_q^\HH(0) \,  \frac{(S_q-2)(S_q-3)}{(q^n-2)(q^n-3)}.
\end{align*}
For all $q \in Q$ we have
\begin{align*}  
    |\mF^\HH_q| \ge \frac{\displaystyle q^n (\BallH_q-1) \binom{q^n-2}{S_q-2}}{2\Omega^\HH_q}.
\end{align*}
\end{theorem}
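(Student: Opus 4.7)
The strategy is to apply Lemma~\ref{lem:lowerbound} to the bipartite graph $\mB^\HH_q$ introduced above. First, I would observe that a code $\mC \in \mW^\HH_q$ is non-isolated if and only if it contains some pair $\{x,y\}$ with $\dH(x,y) \le d-1$, i.e.\ if and only if $\mC \in \mF^\HH_q$. Hence the collection of non-isolated vertices of $\mW^\HH_q$ is exactly $\mF^\HH_q$ and the graph-theoretic machinery applies directly, provided we equip $\mV^\HH_q$ with a suitable association.

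The natural choice is $\alpha: \mV^\HH_q \times \mV^\HH_q \to \{0,1,2\}$ given by $\alpha(P,P')=|P \cap P'|$, an association of magnitude $r=2$. I would verify $\alpha$-regularity by noting that the number of size-$S_q$ subsets of $\F_q^n$ containing two prescribed pairs $P,P'$ depends only on $|P \cup P'|=4-\alpha(P,P')$, yielding
\begin{align*}
\mW_2(\alpha) = \binomi{q^n-2}{S_q-2}, \qquad \mW_1(\alpha) = \binomi{q^n-3}{S_q-3}, \qquad \mW_0(\alpha) = \binomi{q^n-4}{S_q-4}.
\end{align*}

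The next step is to count the fibers of $\alpha$. Clearly $|\alpha^{-1}(2)|=|\mV^\HH_q|$. For $|\alpha^{-1}(1)|$, fixing $\{x,y\} \in \mV^\HH_q$, a partner pair sharing exactly one element is of the form $\{x,z\}$ or $\{y,z\}$ with $z \notin \{x,y\}$ at Hamming distance at most $d-1$ from $x$ or from $y$; since $\{x,y\} \in \mV^\HH_q$ forces each endpoint into the $(d-1)$-ball around the other, this gives precisely $2(\BallH_q - 2)$ partners. Thus $|\alpha^{-1}(1)|=|\mV^\HH_q|\,\beta^\HH_q(1)$, and $|\alpha^{-1}(0)|=|\mV^\HH_q|\,\beta^\HH_q(0)$ follows from the identity $|\mV^\HH_q|^2=\sum_{\ell=0}^{2}|\alpha^{-1}(\ell)|$ together with the formula $|\mV^\HH_q|=\tfrac{1}{2}q^n(\BallH_q-1)$.

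Finally, I would substitute into the bound of Lemma~\ref{lem:lowerbound} and divide numerator and denominator by $|\mV^\HH_q|\cdot\binomi{q^n-2}{S_q-2}$. Using the ratios $\binomi{q^n-3}{S_q-3}/\binomi{q^n-2}{S_q-2} = (S_q-2)/(q^n-2)$ and $\binomi{q^n-4}{S_q-4}/\binomi{q^n-2}{S_q-2}=(S_q-2)(S_q-3)/((q^n-2)(q^n-3))$, the denominator collapses to $\Omega^\HH_q$ and the numerator becomes $\tfrac{1}{2}q^n(\BallH_q-1)\binomi{q^n-2}{S_q-2}$, yielding the stated inequality. The only subtle point is the fiber count for $\alpha^{-1}(1)$: the subtraction $\BallH_q-2$ (and not $\BallH_q-1$) is justified precisely because membership in $\mV^\HH_q$ constrains both endpoints of a pair to lie in the appropriate Hamming ball.
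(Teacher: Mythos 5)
Your proof is correct and follows essentially the same route as the paper: you define the same association (your $|P\cap P'|$ equals the paper's $4-|P\cup P'|$ for two-element sets), compute the same fiber sizes and degree sequence $\mW_\ell(\alpha)=\binom{q^n-4+\ell}{S_q-4+\ell}$, and apply Lemma~\ref{lem:lowerbound} with the same binomial-ratio simplifications to reach $\Omega^\HH_q$.
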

\begin{proof}
Let $\alpha: \mV^\HH_q \times \mV^\HH_q \longrightarrow \{0,1,2\}$ be defined
by $$\alpha(\{x,y\},\{t,z\}) := 4-|\{x,y,t,z\}|$$
for all $x,y,t,z \in \F_q^n.$
\begin{claim} \label{cl:A}
For all $q \in Q$ we have
\begin{align*}
|\alpha^{-1}(2)| &= |\mV^\HH_q|,      \\
|\alpha^{-1}(1)| &= 2|\mV^\HH_q|(\BallH_q-2),  \\
|\alpha^{-1}(0)| &= |\mV^\HH_q|(|\mV^\HH_q|-2\BallH_q+3).
\end{align*}
\end{claim}
\begin{clproof}
It is easy to see that $|\alpha^{-1}(2)| = |\mV^\HH_q|$. 
The elements of $\alpha^{-1}(1)$ can all be constructed by freely choosing 
 $\{x,y\} \in \mV^\HH_q$
 and then $\{z,t\} \in \mV^\HH_q$ with 
 either $t=x$ or $t=y$ and $$z \in \{v \in \F_q^n \mid \dH(t,v) \le d-1\} \backslash \{x,y\}.$$  Therefore $$ |\alpha^{-1}(1)|=2|\mV^\HH_q|(\BallH_q-2).$$
To compute $|\alpha^{-1}(0)|$ we simply note that  $$|\mV^\HH_q|^2=|\alpha^{-1}(0)|+|\alpha^{-1}(1)|+|\alpha^{-1}(2)|.$$ Therefore the value of $|\alpha^{-1}(0)|$ 
follows from the values of $|\alpha^{-1}(1)|$ and $|\alpha^{-1}(2)|$.
\end{clproof}

One easily checks that $\alpha$ is an association on $\mV^\HH_q$ and that 
the bipartite graph $\mB^\HH_q$ is regular with respect to $\alpha$. More precisely, for $(\{x,y\},\{t,z\}) \in \mV^\HH_q \times \mV^\HH_q$ let $\ell=\alpha(\{x,y\},\{t,z\})$. Then 
\begin{align} \label{eq:walpha}
    \mW^\HH_{q,\ell}(\alpha) := |\{W \in \mW^\HH_q \mid \{x,y,t,z\} \subseteq W\}| \;=  \binomi{q^n-4+\ell}{s-4+\ell}.
\end{align}
We can now apply Lemma~\ref{lem:lowerbound} obtaining that $|\mF_q^\HH|$ is lower bounded by
\begin{equation*}
\frac{\mW^\HH_{q,2}(\alpha)^2 \, |\mV^\HH_q|^2}{|\alpha^{-1}(2)|\mW^\HH_{q,2}(\alpha) + |\alpha^{-1}(1)|\mW^\HH_{q,1}(\alpha) + |\alpha^{-1}(0)|\mW^\HH_{q,0}(\alpha)}.
\end{equation*}
Finally, combining the identity
\begin{align} \label{eq:binomi}
\binom{m}{\ell} = \frac{m}{\ell} \binom{m-1}{\ell-1}
\end{align}
with the formulas in Claim~\ref{cl:A} and~\eqref{eq:walpha}, easy computations yield the desired result.
\end{proof}

We can use the previous two results to study the asymptotic density of non-linear codes of given size and minimum distance bounded from below. 

\begin{corollary} \label{cor:mainclasscor}
For all $q \in Q$ we have
\begin{align*}
    \deltaH(n,S_q,d) &\ge 1- \frac{(\BallH_q-1)S_q(S_q-1)}{2\left(q^n-1\right)}, \\
    \deltaH(n,S_q,d) &\le 1-\frac{(\BallH_q-1)S_q(S_q-1)}{2\Omega^\HH_q(q^n-1)}, 
\end{align*}
where $\Omega^\HH_q$ is defined as in Theorem~\ref{thm:hamlow}.
\end{corollary}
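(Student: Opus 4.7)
The plan is to observe that since $|\mW^\HH_q|=\binom{q^n}{S_q}$ counts all codes of cardinality $S_q$ and $\mF^\HH_q$ is precisely the complement of the family of codes with minimum distance at least $d$, the density function satisfies the identity
\begin{equation*}
\deltaH(n,S_q,d) = 1 - \frac{|\mF^\HH_q|}{\binom{q^n}{S_q}}.
\end{equation*}
Thus an upper bound on $|\mF^\HH_q|$ gives a lower bound on $\deltaH(n,S_q,d)$, and a lower bound on $|\mF^\HH_q|$ gives an upper bound on $\deltaH(n,S_q,d)$. These two ingredients are exactly what Theorem~\ref{thm:hamupp} and Theorem~\ref{thm:hamlow} provide.

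The only computational step needed is the elementary binomial identity
\begin{equation*}
\frac{\binom{q^n-2}{S_q-2}}{\binom{q^n}{S_q}} = \frac{S_q(S_q-1)}{q^n(q^n-1)},
\end{equation*}
which follows by two applications of \eqref{eq:binomi}. First, I would plug the upper bound from Theorem~\ref{thm:hamupp} into the identity for $\deltaH(n,S_q,d)$, cancel the factor $q^n$ against the one appearing in the denominator $q^n(q^n-1)$, and read off the claimed lower bound on $\deltaH(n,S_q,d)$. Then I would do the analogous substitution with Theorem~\ref{thm:hamlow}: the extra factor $1/(2\Omega^\HH_q)$ simply transfers from the numerator of $|\mF^\HH_q|$ into the denominator of the term being subtracted from $1$.

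There is essentially no obstacle here beyond bookkeeping; the substantive content is concentrated in Theorems~\ref{thm:hamupp} and~\ref{thm:hamlow}, and this corollary is a one-line algebraic rearrangement of each of them. The only thing to double-check is that the $q^n$ in the ball factor $\tfrac{1}{2}q^n(\BallH_q-1)$ cancels cleanly against the $q^n$ in $q^n(q^n-1)=q^n(q^n-1)$, which it does, leaving the denominator $2(q^n-1)$ in the lower bound and $2\Omega^\HH_q(q^n-1)$ in the upper bound.
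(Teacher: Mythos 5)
Your argument is correct and is exactly the intended derivation: the paper states the corollary without a written proof, and your route — expressing $\deltaH(n,S_q,d)=1-|\mF^\HH_q|/\binom{q^n}{S_q}$, substituting the bounds from Theorems~\ref{thm:hamupp} and~\ref{thm:hamlow}, and simplifying via $\binom{q^n-2}{S_q-2}/\binom{q^n}{S_q}=S_q(S_q-1)/(q^n(q^n-1))$ — is the same one-line rearrangement the authors have in mind. The cancellations check out.
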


It is now interesting to study the asymptotics of the previous bounds as the alphabet size $q$ tends to infinity. For this,
 we will need the following estimate:
\begin{align}
\BallH_q \sim \binomi{n}{d-1}q^{d-1} \qquad \textnormal{
 as $q \to +\infty$}.
\end{align}
We are now ready to state and prove one of the main results of this paper.

\begin{theorem} \label{thm:classasymp}
Let $\gamma_q^\HH=\sqrt{q^{n-d+1}}$. We have 
\begin{align*}
    \lim_{q \to +\infty}\delta_q^\HH(n,S_q,d)= \begin{cases} 1 \quad &\textnormal{if $S_q \in o(\gamma_q^\HH)$,} \\
    0 \quad &\textnormal{if $\gamma_q^\HH \in o(S_q)$.}
    \end{cases}
\end{align*}
\end{theorem}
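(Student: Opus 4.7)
The plan is to read both inequalities in Corollary~\ref{cor:mainclasscor} asymptotically in terms of the single quantity
$$T_q := \frac{(\BallH_q-1)S_q(S_q-1)}{2(q^n-1)}.$$
Using the asymptotic $\BallH_q \sim \binom{n}{d-1}q^{d-1}$ displayed just before the theorem, one immediately obtains $T_q \sim \binom{n}{d-1}\,S_q^2/(2\,q^{n-d+1})$, so $T_q$ is, up to a factor tending to~$1$, a constant multiple of $(S_q/\gamma_q^\HH)^2$. The whole argument then reduces to tracking where $T_q$ sits relative to~$1$ and, in the dense regime, how the correction factor $\Omega^\HH_q$ compares to $T_q$.

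The first case ($S_q \in o(\gamma_q^\HH)$) is the easy one: it gives $T_q \to 0$, and the lower bound in Corollary~\ref{cor:mainclasscor} yields $\deltaH(n,S_q,d) \ge 1 - T_q \to 1$. Since $\deltaH(n,S_q,d) \le 1$ by definition, this forces $\deltaH(n,S_q,d) \to 1$ and disposes of the first limit immediately.

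For the second case ($\gamma_q^\HH \in o(S_q)$), the task is to show that the upper bound in Corollary~\ref{cor:mainclasscor} tends to~$0$, i.e., that $T_q/\Omega^\HH_q \to 1$. Note first that $S_q \to +\infty$, since $\gamma_q^\HH \to +\infty$ using $d \le n$. Expanding
$$\Omega^\HH_q = 1 + \beta_q^\HH(1)\,\frac{S_q-2}{q^n-2} + \beta_q^\HH(0)\,\frac{(S_q-2)(S_q-3)}{(q^n-2)(q^n-3)}$$
and substituting the formulas $\beta_q^\HH(0) = \tfrac{1}{2}q^n(\BallH_q-1) - 2\BallH_q + 3$ and $\beta_q^\HH(1) = 2\BallH_q - 4$ from Theorem~\ref{thm:hamlow}, one checks that the third summand is asymptotic to $T_q$, while the middle summand is of order $T_q/S_q$ and the constant $1$ is $O(1)$; both become negligible against $T_q$ as $q \to +\infty$. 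Consequently $\Omega^\HH_q \sim T_q$, and the upper bound in Corollary~\ref{cor:mainclasscor} collapses to $\deltaH(n,S_q,d) \le 1 - T_q/\Omega^\HH_q \to 0$.

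The only delicate point is precisely the asymptotic $\Omega^\HH_q \sim T_q$ in the second case: one really needs $T_q/\Omega^\HH_q \to 1$, and not merely convergence to a positive constant less than~$1$, for the upper bound to force $\deltaH \to 0$. This is what pins down $\gamma_q^\HH = \sqrt{q^{n-d+1}}$ as the exact threshold. Everything else is a routine substitution of the ball asymptotic into Corollary~\ref{cor:mainclasscor} and elementary manipulations of limits.
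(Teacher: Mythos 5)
Your proof is correct and follows essentially the same route as the paper: both arguments read the two inequalities of Corollary~\ref{cor:mainclasscor} asymptotically, obtaining density $1$ when the quantity you call $T_q$ tends to $0$, and density $0$ in the regime $\gamma_q^\HH \in o(S_q)$ after verifying $\Omega^\HH_q \sim T_q$. Your term-by-term analysis of $\Omega^\HH_q$ is more explicit than the paper's and in fact catches a slip there: the paper displays $\Omega^\HH_q \sim \binom{n}{d-1}S_q^2/(2\gamma_q^\HH)$, whereas the correct asymptotic, and the one that makes the ratio tend to $1$, is $\Omega^\HH_q \sim \binom{n}{d-1}S_q^2/\bigl(2(\gamma_q^\HH)^2\bigr)$.
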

\begin{proof}
If $S_q \in o(\gamma_q^\HH)$,
then 
\begin{align} \label{eq:lowerboundclass}
\lim_{q \to +\infty} \frac{(\BallH_q-1)S_q(S_q-1)}{2\left(q^n-1\right)} =0.\end{align}
Along with Corollary~\ref{cor:mainclasscor}, this yields the first limit in the statement. Now suppose that
$\gamma_q^\HH \in o(S_q)$ as $q \to +\infty$.
 Then it is not hard to see that  $$\Omega_q^\HH \sim \frac{\binomi{n}{d-1}S_q^2}{2 (\gamma_q^{\HH})^2} \quad  \textnormal{as $q \to +\infty$}.$$
 Therefore, since $1 \in o(S_q)$ as $q \to +\infty$, we have
 $$\frac{(\BallH_q-1)S_q(S_q-1)}{2\Omega^\HH_q(q^n-1)} \sim 1 \quad \textnormal{ as $q \to +\infty$}.$$
 This gives the second limit in the statement thanks to Corollary~\ref{cor:mainclasscor}.
\end{proof}

\begin{remark}
The Gilbert-Varshamov bound (see \cite{gilbert1952comparison,varshamov1957estimate}) can be used to show the existence of error-correcting codes $\mC \subseteq \F_q^n$ having minimum distance lower bounded by $d \ge 2$ and cardinality $$\sim \frac{q^{n-d+1}}{\binom{n}{d-1}} \mbox{ as $q \to +\infty$.}$$ Our results show that, while these codes exist, they are very far from being dense. In particular, over a sufficiently large alphabet, the typical non-linear code whose cardinality is close to the Gilbert-Varshamov bound computed for a given distance $d$, has minimum distance strictly smaller than $d$. 
\end{remark}

\section{The Typical Subspace Code} \label{sec:subspacecodes}

In this section we establish the analogue of Theorem~\ref{thm:classasymp} for subspace codes endowed with the injection metric.

\begin{notation}
We fix integers $n$, $k$ and $d$ with $2 \le  d \le k\le n-k$ and let $(S_q)_{q \in Q}$ be a sequence of integers with $S_q \ge 2$ for all $q \in Q$ and  for which $\lim_{q \to +\infty} S_q$ exists. We consider the bipartite graphs $$\mB^\II_q=(\mV^\II_q,\mW^\II_q,\mE^\II_q),$$ where $\mV^\II_q=\{\{X,Y\} \subseteq \mG_q(k,n) \mid X \ne Y, \, \dI(X, Y) \le d-1 \},$ $\mW^\II_q$ is the collection of subspace codes in $\mG_q(k,n)$ having cardinality $S_q$, and $(\{X,Y\},\mC) \in \mE^\II_q$ if and only if $\{X,Y\} \subseteq \mC$. 
\end{notation}

From now on, let $\BallI_q$ denote the size of the injection ball in $\mG_q(k,n)$ of radius $d-1$ given in Proposition~\ref{prop:ballsizeI}. 
We have \begin{align*}
    |\mV^\II_q| = \frac{1}{2}\qbin{n}{k}{q} \left(\BallI_q-1\right), \quad  |\mW^\II_q| = \binomi{\qbin{n}{k}{q}}{S_q}.
\end{align*}
By applying Lemma~\ref{lem:lowerbound} we obtain the following upper bound.

\begin{theorem} \label{thm:subupp}
Let $\mF_q^\II$ be the collection of subspace codes $\mC \subseteq \mG_q(k,n)$ of cardinality $S_q$ and minimum injection distance at most $d-1$. 
We have
$$|\mF_q^\II| \le  \frac{1}{2}\qbin{n}{k}{q} \left(\BallI_q-1\right) \binomi{\qbin{n}{k}{q}-2}{S_q-2}.$$
\end{theorem}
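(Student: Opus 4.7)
The plan is to apply Lemma~\ref{lem:upperbound} directly to the bipartite graph $\mB^\II_q$, after identifying $\mF_q^\II$ with its set of non-isolated vertices in $\mW^\II_q$ and verifying that $\mB^\II_q$ is left-regular with the expected degree.

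First, I would observe that a code $\mC \in \mW^\II_q$ is non-isolated in $\mB^\II_q$ if and only if there exist $X, Y \in \mC$ with $X \ne Y$ and $\dI(X,Y) \le d-1$, which is precisely the condition $\dI(\mC) \le d-1$. Hence $\mF_q^\II$ is exactly the set of non-isolated vertices of $\mW^\II_q$.

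Next, I would check left-regularity. Fix a vertex $\{X,Y\} \in \mV^\II_q$. The codes $\mC \in \mW^\II_q$ with $(\{X,Y\},\mC) \in \mE^\II_q$ are exactly the $S_q$-subsets of $\mG_q(k,n)$ containing both $X$ and $Y$, whose number is
\begin{equation*}
\binomi{\qbin{n}{k}{q}-2}{S_q-2}.
\end{equation*}
This is independent of $\{X,Y\}$, so $\mB^\II_q$ is left-regular of this degree. Since $S_q \ge 2$, the degree is strictly positive.

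Finally, I would count $|\mV^\II_q|$. By Proposition~\ref{prop:ballsizeI}, the injection ball of radius $d-1$ in $\mG_q(k,n)$ has the same size $\BallI_q$ regardless of its center. Counting ordered pairs $(X,Y)$ with $X \ne Y$ and $\dI(X,Y) \le d-1$, there are $\qbin{n}{k}{q}$ choices for $X$ and then $\BallI_q - 1$ choices for $Y$ (namely the elements of the injection ball of radius $d-1$ around $X$, excluding $X$ itself). Dividing by $2$ to pass to unordered pairs yields $|\mV^\II_q| = \tfrac{1}{2}\qbin{n}{k}{q}(\BallI_q-1)$. Plugging these values into the bound $|\mF_q^\II| \le |\mV^\II_q| \cdot \partial$ given by Lemma~\ref{lem:upperbound} produces the stated inequality. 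There is no real obstacle in the argument; the only minor points to be careful about are the double counting of unordered pairs and the uniformity of the ball size (which is guaranteed by Proposition~\ref{prop:ballsizeI}).
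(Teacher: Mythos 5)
Your proposal is correct and matches the paper's approach exactly: identify $\mF_q^\II$ with the non-isolated vertices of $\mW^\II_q$, verify left-regularity of degree $\binomi{\qbin{n}{k}{q}-2}{S_q-2}$, compute $|\mV^\II_q|=\tfrac{1}{2}\qbin{n}{k}{q}(\BallI_q-1)$ via Proposition~\ref{prop:ballsizeI}, and apply Lemma~\ref{lem:upperbound}. (The paper's surrounding text mistakenly cites Lemma~\ref{lem:lowerbound} here; the intended and correct reference is Lemma~\ref{lem:upperbound}, which is what you use.)
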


We then proceed as we did for block codes, obtaining the analogue of Theorem~\ref{thm:hamlow}. The proof is omitted.

\begin{theorem} \label{thm:sublow}
Let $\mF_q^\II$ be the collection of subspace codes $\mC \subseteq \mG_q(k,n)$ of cardinality $S_q$ and minimum injection distance at most $d-1$. Define the quantities
\begin{align*}
    \beta_q^\II(0) &= \frac{1}{2}\qbin{n}{k}{q} (\BallI_q-1) - 2\BallI_q +3, \\
    \beta_q^\II(1) &= 2\BallI_q-4, \\
    \Omega^\II_q &= 1 + \beta_q^\II(1)  \frac{S_q-2}{\qbin{n}{k}{q}-2} \, + \, \beta_q^\II(0)   \frac{(S_q-2)(S_q-3)}{\left(\qbin{n}{k}{q}-2\right)\left(\qbin{n}{k}{q}-3\right)}.
\end{align*}
We have
\begin{align*}
|\mF^\II_q| \ge \frac{\displaystyle \qbin{n}{k}{q} (\BallI_q-1) \binomi{\qbin{n}{k}{q}-2}{S_q-2}}{2\Omega^\II_q}.
\end{align*}
\end{theorem}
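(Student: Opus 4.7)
The plan is to follow the template of the proof of Theorem~\ref{thm:hamlow}, with the Grassmannian $\mG_q(k,n)$ playing the role of $\F_q^n$ and the injection ball of size $\BallI_q$ replacing the Hamming ball. First I would define the function
$$\alpha: \mV^\II_q \times \mV^\II_q \to \{0,1,2\}, \qquad \alpha(\{X,Y\},\{Z,T\}) := 4-|\{X,Y,Z,T\}|,$$
and verify that it is an association on $\mV^\II_q$ of magnitude $2$ in the sense of Definition~\ref{def:assoc}; symmetry and $\alpha(\{X,Y\},\{X,Y\})=2$ are immediate.

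Next I would establish the analogue of Claim~\ref{cl:A}: trivially $|\alpha^{-1}(2)| = |\mV^\II_q|$; the set $\alpha^{-1}(1)$ is counted by freely selecting $\{X,Y\} \in \mV^\II_q$, a vertex $T \in \{X,Y\}$, and then a third subspace $Z \in \mG_q(k,n)$ with $\dI(T,Z) \le d-1$ and $Z \notin \{X,Y\}$, which by Proposition~\ref{prop:ballsizeI} gives $|\alpha^{-1}(1)| = 2|\mV^\II_q|(\BallI_q - 2)$; and $|\alpha^{-1}(0)|$ follows from $|\mV^\II_q|^2 = |\alpha^{-1}(0)| + |\alpha^{-1}(1)| + |\alpha^{-1}(2)|$.

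Then I would verify that $\mB^\II_q$ is $\alpha$-regular by noting that, for $\ell \in \{0,1,2\}$, the number of codes $\mC \in \mW^\II_q$ containing a fixed set of $4-\ell$ distinct elements of $\mG_q(k,n)$ depends only on its cardinality, namely
$$\mW^\II_{q,\ell}(\alpha) = \binomi{\qbin{n}{k}{q} - 4 + \ell}{S_q - 4 + \ell}.$$
Applying Lemma~\ref{lem:lowerbound} to the graph $\mB^\II_q$ with these quantities and simplifying via the identity~\eqref{eq:binomi} applied twice to extract the factor $\binomi{\qbin{n}{k}{q}-2}{S_q-2}$ yields the stated lower bound after collecting the three contributions into $\Omega^\II_q$. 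The only delicate point is bookkeeping in the final simplification; conceptually, the argument is a verbatim transcription of the Hamming case, since the homogeneity of both the ambient combinatorial structure and the injection metric (Proposition~\ref{prop:ballsizeI}) ensures that every counting step transfers without modification.
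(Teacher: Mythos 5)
Your proposal is correct and is precisely the argument the paper has in mind: the authors explicitly omit the proof of Theorem~\ref{thm:sublow}, stating only that one ``proceeds as for block codes,'' and your transcription of the Hamming-case proof (same association $\alpha$, same counts for $|\alpha^{-1}(\ell)|$ using Proposition~\ref{prop:ballsizeI} in place of the Hamming ball, same $\alpha$-regularity, same application of Lemma~\ref{lem:lowerbound} and simplification via~\eqref{eq:binomi}) is exactly that. No gaps.
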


From Theorems~\ref{thm:subupp} and~\ref{thm:sublow} we derive bounds on the density functions of subspace codes in~$\mG_q(k,n)$ of cardinality~$S_q$ and minimum distance at least $d$.

\begin{corollary} \label{cor:denssub}
Let $\Omega^\II_q$ be defined as in Theorem~\ref{thm:sublow}. For all $q \in Q$ we have
\begin{align*}
    \deltaI(n,k,S_q,d) &\ge 
    1-\frac{(\BallI_q-1)S_q(S_q-1)}{2\left(\qbin{n}{k}{q}-1\right)}, \\
    \deltaI(n,k,S_q,d) &\le
    1-\frac{(\BallI_q-1)S_q(S_q-1)}{2\Omega^\II_q\left(\qbin{n}{k}{q}-1\right)}.
\end{align*}
\end{corollary}

In order to compute the asymptotic density of subspace codes, we will need the following result.

\begin{proposition} \label{prop:asball}
We have $\BallI_q \sim q^{(d-1)(n-d+1)}$ as $q \to + \infty$.
\end{proposition}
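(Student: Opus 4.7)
The plan is to expand $\BallI_q$ via the explicit formula of Proposition~\ref{prop:ballsizeI} and then identify the dominant term as $q \to +\infty$ using the standard asymptotic estimate for the $q$-binomial coefficient. Recall that for fixed integers $m \ge \ell \ge 0$ one has
$$
\qbin{m}{\ell}{q} \sim q^{\ell(m-\ell)} \qquad \textnormal{as } q \to +\infty,
$$
which follows directly from the product formula $\qbin{m}{\ell}{q} = \prod_{j=0}^{\ell-1}(q^{m-j}-1)/(q^{\ell-j}-1)$.

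By Proposition~\ref{prop:ballsizeI}, the $i$-th summand in $\BallI_q$ is
$$
q^{i^2}\qbin{k}{i}{q}\qbin{n-k}{i}{q} \sim q^{\,i^2 + i(k-i) + i(n-k-i)} = q^{\,i(n-i)} \qquad \textnormal{as } q \to +\infty,
$$
for each fixed $0 \le i \le d-1$. Since the number of summands is the constant $d$ (independent of $q$), the asymptotic behavior of $\BallI_q$ is governed by the summand whose exponent $i(n-i)$ is largest over $i \in \{0,1,\ldots,d-1\}$.

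The key observation is that the map $i \mapsto i(n-i)$ is strictly increasing on $[0, n/2]$, and our standing assumptions $d \le k \le n-k$ imply $d-1 < n/2$. Consequently the maximum over $i \in \{0,\ldots,d-1\}$ is attained uniquely at $i=d-1$, giving exponent $(d-1)(n-d+1)$. All other summands are smaller by at least a factor of $q$ in the leading order, so they are negligible. Therefore
$$
\BallI_q \sim q^{(d-1)^2} \qbin{k}{d-1}{q} \qbin{n-k}{d-1}{q} \sim q^{(d-1)(n-d+1)} \qquad \textnormal{as } q \to +\infty,
$$
which is the claim. The only potentially subtle point is ensuring that $d-1$ indeed lies below $n/2$ so that the supremum is attained at the top of the summation range; once this is verified from the hypotheses in the notation box of Section~\ref{sec:subspacecodes}, the rest is a direct comparison of polynomial degrees.
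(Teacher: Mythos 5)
Your proof is correct and is precisely the argument the paper has in mind: the paper omits the proof with the remark that it ``follows from well-known estimates for $q$-ary binomial coefficients,'' and your expansion of each summand as $q^{i(n-i)}$, together with the observation that $d-1 \le k-1 < n/2$ forces the $i=d-1$ term to dominate, is exactly that.
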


The proof of Proposition~\ref{prop:asball} follows from well-known estimates for $q$-ary binomial coefficients and is therefore omitted.

\begin{theorem} \label{thm:subasymp}
Let $\gamma_q^\II=\sqrt{q^{k(n-k)-(d-1)(n-d+1)}}$. We have
\begin{align*}
    \lim_{q \to +\infty}\delta_q^\II(n,k,S_q,d)= \begin{cases} 1 \quad &\textnormal{if $S_q \in o(\gamma_q^\II)$,} \\
    0 \quad &\textnormal{if $\gamma_q^\II \in o(S_q)$.}
    \end{cases}
\end{align*}
\end{theorem}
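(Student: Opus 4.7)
The plan is to mimic exactly the argument of Theorem~\ref{thm:classasymp}, using the density bounds of Corollary~\ref{cor:denssub} in place of Corollary~\ref{cor:mainclasscor} and the asymptotic $\BallI_q \sim q^{(d-1)(n-d+1)}$ from Proposition~\ref{prop:asball} in place of the Hamming-ball estimate. The key quantity driving the dichotomy is the ratio
\[
\frac{\BallI_q-1}{\qbin{n}{k}{q}-1} \sim \frac{q^{(d-1)(n-d+1)}}{q^{k(n-k)}} = \frac{1}{(\gamma_q^\II)^2},
\]
where we use the standard asymptotic $\qbin{n}{k}{q} \sim q^{k(n-k)}$ for $q$-ary binomial coefficients.

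For the first case, assume $S_q \in o(\gamma_q^\II)$. Then the display above shows that
\[
\frac{(\BallI_q-1) S_q(S_q-1)}{2\bigl(\qbin{n}{k}{q}-1\bigr)} \sim \frac{S_q^2}{2(\gamma_q^\II)^2} \to 0,
\]
and the first bound in Corollary~\ref{cor:denssub} forces $\deltaI(n,k,S_q,d) \to 1$.

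For the second case, assume $\gamma_q^\II \in o(S_q)$. I need to pin down the asymptotics of $\Omega_q^\II$. Since $\beta_q^\II(0) \sim \frac12 \qbin{n}{k}{q}\BallI_q$ and $\beta_q^\II(1) \sim 2\BallI_q$, one checks that the third summand in the definition of $\Omega_q^\II$ dominates: indeed,
\[
\beta_q^\II(0)\,\frac{(S_q-2)(S_q-3)}{\bigl(\qbin{n}{k}{q}-2\bigr)\bigl(\qbin{n}{k}{q}-3\bigr)} \sim \frac{\BallI_q\, S_q^2}{2\qbin{n}{k}{q}} \sim \frac{S_q^2}{2(\gamma_q^\II)^2},
\]
while the second summand is of lower order ($\sim 2\BallI_q S_q / \qbin{n}{k}{q} = o(S_q^2/(\gamma_q^\II)^2)$ thanks to $\gamma_q^\II \in o(S_q)$) and the constant $1$ is negligible since $S_q^2/(\gamma_q^\II)^2 \to \infty$. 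Therefore $\Omega_q^\II \sim S_q^2/(2(\gamma_q^\II)^2)$. Substituting into the upper bound of Corollary~\ref{cor:denssub} gives
\[
\frac{(\BallI_q-1)S_q(S_q-1)}{2\Omega^\II_q\bigl(\qbin{n}{k}{q}-1\bigr)} \sim \frac{S_q^2/(\gamma_q^\II)^2}{2\Omega_q^\II} \sim 1,
\]
so $\deltaI(n,k,S_q,d) \to 0$, as required.

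The only mildly delicate step is the asymptotic analysis of $\Omega_q^\II$: one must verify that, under the hypothesis $\gamma_q^\II \in o(S_q)$, the quadratic-in-$S_q$ summand genuinely dominates both the constant $1$ and the linear-in-$S_q$ term. This is immediate once the ratios are rewritten in terms of $\gamma_q^\II$, exactly as in the proof of Theorem~\ref{thm:classasymp}, so the argument reduces to a routine limit computation.
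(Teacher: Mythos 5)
Your proof is correct and follows the paper's argument exactly: plug the estimates $\BallI_q \sim q^{(d-1)(n-d+1)}$ and $\qbin{n}{k}{q} \sim q^{k(n-k)}$ into the two bounds of Corollary~\ref{cor:denssub}, observe that the relevant ratio scales like $S_q^2/(\gamma_q^\II)^2$, and for the sparseness half determine the asymptotics of $\Omega_q^\II$ by isolating the dominant (quadratic in $S_q$) summand. One small remark: you write $\Omega_q^\II \sim S_q^2/\bigl(2(\gamma_q^\II)^2\bigr)$, whereas the paper states $\Omega_q^\II \sim S_q^2/(2\gamma_q^\II)$; your version with the square is the correct one (it is the only one consistent with the paper's very next line that the ratio in the upper bound tends to~$1$), so the paper evidently has a small typo there, which your computation silently repairs.
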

\begin{proof}
If $S_q \in o(\gamma_q^\II)$ as $q \to+\infty$, then one can check easily that
\begin{align} \label{eq:subone}
    \lim_{q \to +\infty} \frac{(\BallI_q-1)S_q(S_q-1)}{2\left(\qbin{n}{k}{q}-1\right)} =0.
\end{align}
This computes the first limit in the statement thanks to Corollary~\ref{cor:denssub}.
If $\gamma_q^\II \in o(S_q)$ as $q \to +\infty$, then $$\Omega_q^\II \sim \frac{S_q^2}{2{(\gamma_q^\II)^2}} \quad  \textnormal{as $q \to +\infty$.}$$
Therefore
\begin{align*}
    \frac{(\BallI_q-1)S_q(S_q-1)}{2\Omega^\II_q\left(\qbin{n}{k}{q}-1\right)} \sim 1 \quad \textnormal{ as $q \to +\infty$,}
\end{align*}
concluding the proof by Corollary~\ref{cor:denssub}.
\end{proof}

\section{Partial Spreads in Finite Geometry}

The results of the previous section have a curious application in finite geometry. A subspace code $\mC \subseteq \mG_q(k,n)$ with injection distance $k$ is a so-called \emph{partial spread}. Therefore, our results tell us for which cardinalities $S$ a uniformly random collection of $S$ subspaces in $\mG_q(k,n)$ form a partial spread with high probability (for $q$ large). The decisive cardinality for sparseness/density, in the asymptotics, is
$$S_q\sim \sqrt{q^{n-2k+1}}.$$
More precisely, the following holds.

\begin{corollary} \label{cor:partialspr}
Let $n \ge k \ge 1$ be integers. We have
\begin{align*}
    \lim_{q \to +\infty}\delta_q^\II(n,k,S_q,k)= \begin{cases} 1 \quad &\textnormal{ if $S_q  \in o \left(\sqrt{q^{n-2k+1}} \right)$,} \\
    0 \quad &\textnormal{ if $\sqrt{q^{n-2k+1}}  \in o(S_q)$.}
    \end{cases}
\end{align*}
In particular, assume that $k$ divides $n$ and let $\mC$ be a uniformly random collection of $(q^n-1)/(q^k-1)$ $k$-subspaces of $\F_q^n$. The probability that $\mC$ is a spread goes to 0 as $q$ tends to infinity.
\end{corollary}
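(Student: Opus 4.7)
The plan is to derive this corollary as a direct application of Theorem~\ref{thm:subasymp} with the specialization $d := k$, together with one small reduction and one routine asymptotic computation. Before invoking the theorem, I would note that by Remark~\ref{rem:klen-k} we may assume $k \le n-k$: orthogonal complementation gives a cardinality- and minimum-distance-preserving bijection between $\mG_q(k,n)$ and $\mG_q(n-k,n)$, and in particular sends spreads to spreads. The case $k = 1$ is trivial, since every subspace code has minimum injection distance at least $1$. Under these reductions Theorem~\ref{thm:subasymp} applies, and the first part of the corollary will follow once I verify that $\gamma_q^\II$ specializes to $\sqrt{q^{n-2k+1}}$ when $d = k$.

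This specialization is a one-line algebraic identity: factoring $(n-k)$ out of the first two summands gives
$$k(n-k) - (k-1)(n-k+1) = (n-k) - (k-1) = n - 2k + 1,$$
so the exponent of $q$ inside the square root defining $\gamma_q^\II$ becomes $n-2k+1$, as required.

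For the ``In particular'' clause, I would set $S_q = (q^n-1)/(q^k-1) \sim q^{n-k}$ as $q \to +\infty$ and compute
$$\frac{S_q}{\sqrt{q^{n-2k+1}}} \sim q^{(n-1)/2} \to +\infty$$
(since $n \ge 2k \ge 4$ in the nontrivial range $k \ge 2$, $k \mid n$), so $\sqrt{q^{n-2k+1}} \in o(S_q)$; the first part of the corollary then gives $\delta_q^\II(n,k,S_q,k) \to 0$. To translate this into the spread statement, I would use the standard characterization that a subspace code in $\mG_q(k,n)$ of cardinality $(q^n-1)/(q^k-1)$ is a spread if and only if its minimum injection distance equals $k$ (equivalently, its members are pairwise trivially intersecting as $k$-subspaces). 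I do not anticipate any real obstacle, as the corollary is essentially a relabeling of Theorem~\ref{thm:subasymp}; the only points needing attention are the exponent reduction above and the quick check that $(q^n-1)/(q^k-1)$ sits comfortably above the sparseness/density threshold in the asymptotic regime.
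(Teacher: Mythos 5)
Your core approach is correct and is the natural one: specialize Theorem~\ref{thm:subasymp} with $d=k$, verify the exponent identity $k(n-k)-(k-1)(n-k+1)=n-2k+1$, and for the spread clause compute $S_q=(q^n-1)/(q^k-1)\sim q^{n-k}$, whose ratio to $\sqrt{q^{n-2k+1}}$ grows like $q^{(n-1)/2}$. The characterization of a spread as a collection of $(q^n-1)/(q^k-1)$ pairwise trivially-intersecting $k$-subspaces is also the right bridge.

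However, your two preliminary reductions do not hold up. The orthogonal-complementation step does not reduce to $k\le n-k$ here: complementation sends a code $\mC\subseteq\mG_q(k,n)$ with $\dI(\mC)\ge k$ to a code $\mC^\perp\subseteq\mG_q(n-k,n)$ with the \emph{same} minimum injection distance $k$ (not $n-k$), so $\delta_q^\II(n,k,S_q,k)=\delta_q^\II(n,n-k,S_q,k)$, and applying Theorem~\ref{thm:subasymp} to the right-hand side still requires $d=k\le n-k$ --- the original constraint. The reduction is circular, and the claim that complementation ``sends spreads to spreads'' is false unless $n=2k$. The correct way to handle $k>n-k$ is direct: any two $k$-subspaces of $\F_q^n$ intersect in dimension at least $2k-n>0$, so there is no partial spread of size $\ge 2$ and $\delta_q^\II(n,k,S_q,k)=0$ identically; moreover $\sqrt{q^{n-2k+1}}\le 1<S_q$ makes the first alternative of the corollary vacuous, so the dichotomy is consistent. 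The $k=1$ case is also not as ``trivial'' as your dismissal suggests: since every subspace code in $\mG_q(1,n)$ has $\dI\ge 1$, one gets $\delta_q^\II(n,1,S_q,1)=1$ for all $q$, which is \emph{inconsistent} with the corollary's second alternative (take $S_q\sim q^{n-1}$ to satisfy $\sqrt{q^{n-1}}\in o(S_q)$ while the density stays at $1$). The statement effectively requires $k\ge 2$, which is also the range where Theorem~\ref{thm:subasymp} applies; for $k=1$ the ``in particular'' clause forces $\mC=\mG_q(1,n)$, which \emph{is} a spread, so the probability is $1$, not $0$.
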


\section{Discussion and Future Work}

We described the behavior of the typical non-linear code in the Hamming and the injection metric. In the Hamming metric setting, the typical non-linear code of minimum distance at least $d$ and large cardinality is far from being MDS. This is in strong contrast with the behavior of linear block codes in the Hamming metric. 

An an application of our results in projective geometry, we determine the (asymptotics of) cardinalities $S$ for which a uniformly random collection of subspaces form a partial spread with high probability over a large field. In particular, we conclude that spreads are very rare objects.

A natural problem inspired by the above results is that of understanding which structural invariants of a metric space  determine the distance properties of a uniformly random subset.
The approach taken in this paper shows that graph theory is a valid tool for understanding these structural invariants. There are very natural information theory questions connected to these problems, which will be explored in future work.

\bibliographystyle{amsplain}
\bibliography{ourbib}
\end{document}